\newtheorem{proposition}{Proposition}
\newtheorem{theorem}{Theorem}
\newcommand{\ket}[1]{\left\vert#1\right\rangle}
\newcommand{\bra}[1]{\left\langle#1\right\vert}
\def\bra#1{\langle #1|}
\def\ket#1{\left|#1 \right>}
\def\Tr{\mbox{Tr}}
\begin{document}
\title{Disturbance-Based Measure of Macroscopic Coherence}
\author{Hyukjoon Kwon, Chae-Yeun Park, Kok Chuan Tan, and Hyunseok Jeong$^*$}
\address{Center for Macroscopic Quantum Control, Department of Physics and Astronomy, Seoul National University, Seoul, 151-742, Korea}
\ead{h.jeong37@gmail.com}
\date{\today}

\begin{abstract}
We propose a measure of macroscopic coherence based on the degree of disturbance caused by a coarse-grained measurement.
Based on our measure, we point out that recently proposed criteria of macroscopic coherence may lead to inconsistent results when considering certain states such as a product of  microscopic superpositions.
An inequality relation is proved that relates the Wigner-Yanase-Dyson skew information and the measurement disturbance, providing arguments as to why our approach is able to rule out such inconsistencies.
We show that our measure can also quantify the fragility of a quantum state to a certain type of decoherence.
Our work provides a general framework of quantifying macroscopic coherence from an operational point of view, based on the relationship between the precision of the measurement and disturbance of the quantum state.

\noindent{\it Keywords\/}: Quantum coherence, Quantum macroscopicity, Coarse-grained measurement

\end{abstract}
\pacs{03.65.Ta, 03.65.Yz, 03.67.-a}
\submitto{\NJP}
\maketitle

\section{Introduction}
Schr\"odinger's cat paradox dramatically illustrates a macroscopic object being in a quantum superposition of two macroscopically different states \cite{Schrodinger35}.
Although this famous thought experiment depicts an extreme example, the existence of such
superpositions and entanglement at macroscopic levels is not excluded by quantum theory.
Considerable experimental efforts have gone on to push the envelope by superposing ever larger quantum systems \cite{Brune96,Monroe96,Arndt99,Ourjoumtsev07,  Schoelkopf14, Kasevich15}.
There have also been attempts to characterize and quantify quantumness in a macroscopic sense  \cite{Leggett80,Dur02,Shimizu02,Bjork04,Shimizu05,Cavalcanti06,Korsbakken07,Lee11, Frowis12,Sekatski14,Nimmrichter13,Yadin15, Frowis15, Jeong15,Park16,Yadin16}.
Several general measures for quantifying such \textit{quantum macroscopicity} have been suggested in recent studies  \cite{Lee11,Frowis12, Nimmrichter13}.
However, those measures tend to operate within quite different contexts such as 
distinguishability between component states with a finite measurement precision \cite{Korsbakken07,Sekatski14},
interference in the phase space \cite{Lee11}, usefulness for quantum metrology \cite{Frowis12}, and the minimal modification of quantum theory \cite{Nimmrichter13}.

Meanwhile, a resource theory of quantum coherence has  recently been proposed  \cite{Baumgratz14}.
In \cite{Baumgratz14}, the amount of quantum coherence could be quantified as a physical resource to achieve tasks beyond classical types of resources.
In this viewpoint, recent studies have discovered connections between quantum coherence and other fields of resource theory, including quantum correlation \cite{Ma15, Streltsov15, Xi15}, asymmetry \cite{Gour08,MarvianNC, Marvian14}, and quantum thermodynamics \cite{Cwiklinski15,LostaglioX,LostaglioNC}.
Recently, an axiomatic approach towards macroscopic quantum coherence was  suggested \cite{Yadin16} and several existing measures \cite{Lee11, Frowis12, Girolami14, Yadin15, Yadin16} were investigated based on it.

In this paper, we suggest a measure of macroscopic coherence based on the state disturbance induced by a coarse-grained measurement. 
We show that the disturbance-based measure satisfies recently proposed criteria of macroscopic coherence \cite{Yadin16}, but in some cases cannot yield consistent results without additional constraints.
This problem is overcome in our study by introducing coarse-graining of the measurement depending on the system size.
We prove an inequality which relates the Wigner-Yanase-Dyson skew information (and consequently, the quantum Fisher information) and the
state disturbance induced by coarse-grained measurement, from which we argue that an appropriate limit to yield a consistent measure is the classical
limit.
We further 
show that our concept of quantum macroscopicity corrsponds to the fragility of a quantum state under a certain type of decoherence.
 Our operational viewpoint on quantum macroscopicity allows one to effectively identify the quantum coherence between the macroscopically-separated components of a superposition.
Our approach can be applied to both spin and bosonic systems, and we present several examples that lead to reasonable results.

\section{Disturbance-Based Measure of Macroscopic Quantum coherence}
\subsection{Criteria of macroscopic quantum coherence}
We first review some preliminary concepts regarding macroscopic quantum coherence.
Let us consider a measurement observable  described by a hermitian operator $\hat{A} = \sum_i a_i \ket{i} \bra{i}$. The eigenstates of the observable $\hat{A}$ define a natural orthonormal basis $\{ \ket{i} \}$, which can be used to quantify the amount of coherence in the system.
Previous measures of quantum coherence \cite{Baumgratz14, Streltsov15} quantify the degree of coherence contained in the quantum state with respect to the given basis $\{ \ket{i} \}$.
However, these measures give the same value for every superposition in the form of $\ket{i} + \ket{j}$, without any regard for physical measurement outcomes represented by components $\ket{i}$ and $\ket{j}$, which are $a_i$ and $a_j$ respectively.
In other words, they did not consider how correctly $\ket{i}$ and $\ket{j}$ are discriminated by an actual measurement.
In an attempt to quantify macroscopic quantum coherence however, we should give some consideration to the 
outcomes of a physical measurement.

Recently, Yadin and Vedral  proposed \cite{Yadin16}  a set of conditions that should be satisfied by a proper measure of  macroscopic coherence.
In their proposed resource theory of macroscopic coherence, the free operation $\cal E$ is characterized as completely positive trace-nonincreasing operations satisfying the condition ${\cal E} ({\hat{\rho}}^{(\delta)}) = {\cal E}({\hat{\rho}})^{(\delta)}$, where ${\hat{\rho}}^{(\delta)} = \sum_{a_i - a_j = \delta} {\rho}_{ij} \ket{i}\bra{j}$.
Under such free operations, coherence terms $\ket{i} \bra{j}$ with different modes $\delta = a_i - a_j$ cannot be mixed together, by which a physical distance of superposition $|\delta|$ cannot be increased freely, i.e. a transition from $\ket{0} + \ket{1}$ to $\ket{0}+\ket{N}$ is prohibited when $a_1 \neq a_N$.
This type of free operations has been previously studied in the context of asymmetry in a quantum state \cite{Gour08, MarvianNC, Marvian14}.
With respect to this set of free operations, the authors of \cite{Yadin16} proposed that any reasonable measure of macroscopic quantum coherence $M({\hat{\rho}})$ based on the resource theory should satisfy the following conditions:\\

\begin{itemize}
\item [(M1)~] $M({\hat{\rho}}) \geq 0$ and $M({\hat{\rho}}) = 0$ {\it if and only if} ${\hat{\rho}} = {\hat{\rho}}^{(0)}$.

\item[(M2a)] Non-increasing under any trace-preserving free operation, $M({\cal E}({\hat{\rho}})) \leq M({\hat{\rho}})$.

\item[(M2b)] Non-increasing under any selective free operation,
$\sum_\alpha p_\alpha M({\cal E}_\alpha ({\hat{\rho}}) /p_\alpha) \leq M({\hat{\rho}})$ for ${\cal E} = \sum_\alpha  {\cal E}_\alpha$, where $p_\alpha = \Tr {\cal E}_a({\hat{\rho}})$.

\item[(M3)~] Convexity, $M(\sum_i p_i {\hat{\rho}}_i) \leq \sum_i p_i M({\hat{\rho}}_i)$.

\item [(M4)~] $M(\ket{i} + \ket{j}) > M(\ket{k} + \ket{l})$ if $|a_i - a_j | > |a_k - a_l|$.\\
\end{itemize}

Here, (M1) identifies free states which do not contain any macroscopic quantum coherence. 
(M2a) and (M2b) are required in a sense that one cannot increase macroscopic quantum coherence freely (i.e. by free operations),
and often called weak and strong monotonicity conditions, respectively.
The condition (M3)  guarantees that macroscopic quantum coherence does not increase by mixing quantum states.
Finally, condition (M4) is to quantify the macroscopic size of a superposition based on the distance  between component states in terms of the difference between corresponding eigenvalues.
This additional condition (M4) restricts the set of asymmetry monotones into a set of measures that discriminate macroscopic and microscopic superpositions. 
In this sense, the resource theory of \cite{Yadin16} may be understood as a type of  an asymmetry (M1-M3) in addition to a size factor (M4).
Yadin and Vedral  pointed out \cite{Yadin16}  that among  two general measures of quantum macroscopicity, one for bosonic systems \cite{Lee11} and the other for spin system \cite{Frowis12}, only the latter \cite{Frowis12} based on the quantum Fisher information satisfies all the conditions (M1)--(M4).
Known examples of measures that satisfy all these conditions are the quantum Fisher information and the Wigner-Yanase-Dyson skew information \cite{Yadin16}.

\subsection{Macroscopic coherence and coarse-grained measurement}

We say that macroscopic coherence is coherence of a quantum superposition between two macroscopically distinct states. In other words, the component states of the superposition are supposed to yield two distinct outcomes when a measurement on a macroscopic scale is performed.
We may employ the concept of a coarse-grained measurement \cite{Poulin05, Kofler07, Barnea16} to describe such a macroscopic measurement.
In order to construct a coarse-grained measurement, we first define a smoothing function $q_i^\sigma(x) = (\sqrt{2\pi}\sigma)^{-1} \exp[-(a_i-x)^2/(2\sigma^2)]$, where $x$ is a continuous variable over the real line. A natural choice for the smoothing function $q_i^\sigma(x)$ is a Gaussian distribution centered around measurement outcome $a_i$. The standard deviation $\sigma$ determines the level of precision of the measurement and therefore quantifies the amount of  coarse graining of the measurement. A \textit{coarse grained measurement} is then defined to be the following set of Kraus operators:
\begin{equation}
\hat{Q}^\sigma_x = \sum_{i} \sqrt{q_i^\sigma(x)} \ket{i}\bra{i}.
\end{equation}
One may interpret the above measurement as an interaction with the needle of a measuring apparatus that returns a normal distribution about the position $a_i$ when the system is in the space projected by $\hat{P}_i = \ket{i} \bra{i}$.
If $\sigma \rightarrow 0$, the measurement process becomes projective, while an increasing $\sigma$ implies an increasingly imprecise measurement process.
One may verify that $ \int^{\infty}_{-\infty} \hat{Q}_x^{\sigma \dag} \hat{Q}_x^\sigma dx = \mathbb 1$ for any $\sigma > 0 $ so it is indeed a valid positive-operator valued measurement (POVM).
In such a case, the post measurement state is given by $\Phi_\sigma({\hat{\rho}}) = \int_{-\infty}^\infty dx \hat{Q}_x^\sigma {\hat{\rho}} \hat{Q}_x^{\sigma \dagger} = \sum_{\delta \in \Delta} e^{-\delta^2/(8\sigma^2)} {\hat{\rho}}^{(\delta)}$, where $\Delta = \{a_i - a_j \}$ is a set of the spacing between the eigenvalues of the observable $\hat{A} = \sum_i a_i \ket{i} \bra{i}$.

\subsection{Quantum state disturbance under coarse-grained measurement}
There have been studies on quantifiers of  the size of a superposition based on the distinguishability between two components states with a finite measurement precision \cite{Korsbakken07, Korsbakken10, Sekatski14}.
Reference \cite{Sekatski14} suggested a measure of the size of macroscopic superpositions by quantifying the amount of noise that can be tolerated by a coarse-grained photon number measurement.
Applications of these measures, however, are limited  only to pure states and it is required to choose a specific decomposition (such as $|A\rangle+|B\rangle$) that represents the superposition.
Here, we show that the quantum state disturbance caused by a coarse-grained measurement naturally leads to measures of macroscopic coherence that are applicable to arbitrary forms of states and that satisfy all conditions (M1)--(M4). 

When one performs a non-selective projective (i.e. precise) measurement on the state with the given measurement basis set,
all coherence terms between eigenstates of the different measurement outcomes will vanish. However, when a coarse-grained measurement is performed, certain coherence terms may survive depending on the precision of the measurement.
 It is therefore reasonable to expect that at a certain level of the measurement precision, only macroscopic coherence will be disturbed by the measurement process.
Towards this end, we propose the disturbance of the quantum state induced by the coarse grained measurement process as a natural measure of macroscopic quantum coherence.

In order to quantify quantum macroscopicity by quantum state disturbance,
we will employ distance measures $D(\hat\rho, \hat\tau)$ between quantum states $\hat\rho$ and $\hat\tau$ that satisfy the following set of conditions.

\begin{itemize}
\item [(D1)~] $D(\hat\rho,\hat\tau) \geq 0 $, where the equality is saturated if and only if $\hat\rho = \hat\tau$

\item [(D2)~] Unitary invariance: $D(\hat{U}\hat\rho\hat{U}^\dagger, \hat{U}\hat\tau\hat{U}^\dagger) =D(\hat\rho, \hat\tau)$.

\item [(D3a)] Contractivity under a completely positive trace-preserving map ${\cal E}$, $D(\hat\rho,\hat\tau) \geq D({\cal E}(\hat\rho), {\cal E}(\hat\tau))$ (Note that ${\cal E}$ is not necessarily a free operation).

\item [(D3b)] $D(\hat\rho,\hat\tau) \geq \sum_{\alpha} p_\alpha D({\cal E}_{\alpha}(\hat\rho)/p_\alpha, {\cal E}_{\alpha}(\hat\tau)/p_\alpha)$, when $p_\alpha = \Tr{\cal E}_\alpha(\hat\rho) = \Tr{\cal E}_\alpha(\hat\sigma)$ and $\sum_\alpha {\cal E}_\alpha = {\cal E}$. (Note  that ${\cal E}_\alpha$ is not necessarily a free operation).

\item [(D4)~] Joint convexity : $D(\sum_i p_i \hat\rho_i, \sum_i p_i \hat\tau_i) \leq \sum_i p_i D(\hat\rho_i, \hat\tau_i)$.
\end{itemize}

Remarkably, despite starting from considerably different physical arguments, the following theorem shows that the measurement disturbance satisfies the set of conditions proposed by Yadin and Vedral \cite{Yadin16}.

\begin{theorem} [Disturbance-based measure of macroscopic quantum coherence]
\label{Ms}
For any coarse-grained measurement process $\Phi_\sigma$ with $\sigma>0$,
\begin{equation}
M_\sigma ({\hat{\rho}}) := D({\hat{\rho}}, \Phi_\sigma({\hat{\rho}}))
\end{equation}
satisfies (M1) -- (M4) when the distance measure $D(\hat\rho,\hat\tau)$ satisfies (D1) -- (D4).
\end{theorem}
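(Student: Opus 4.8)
The plan is to verify (M1)--(M4) in turn, leaning on two structural facts about the coarse-graining channel: it is linear, and it acts diagonally on the mode decomposition, $\Phi_\sigma(\hat\rho)=\sum_{\delta\in\Delta}e^{-\delta^2/(8\sigma^2)}\hat\rho^{(\delta)}$, so that distinct modes occupy disjoint matrix supports. Condition (M1) is immediate from (D1): $M_\sigma(\hat\rho)\ge 0$, with equality iff $\Phi_\sigma(\hat\rho)=\hat\rho$; since $\sigma$ is finite and positive, $e^{-\delta^2/(8\sigma^2)}$ equals $1$ only for $\delta=0$ and is strictly smaller for $\delta\ne0$, so $\Phi_\sigma(\hat\rho)=\hat\rho$ holds precisely when $\hat\rho^{(\delta)}=0$ for all $\delta\ne0$, i.e. $\hat\rho=\hat\rho^{(0)}$. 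Condition (M3) is equally short: $\Phi_\sigma(\sum_i p_i\hat\rho_i)=\sum_i p_i\Phi_\sigma(\hat\rho_i)$ by linearity, so joint convexity (D4) gives $M_\sigma(\sum_i p_i\hat\rho_i)=D(\sum_i p_i\hat\rho_i,\sum_i p_i\Phi_\sigma(\hat\rho_i))\le\sum_i p_i M_\sigma(\hat\rho_i)$.

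For the monotonicity conditions the key lemma is that $\Phi_\sigma$ commutes with free operations: using the defining property ${\cal E}(\hat\rho^{(\delta)})={\cal E}(\hat\rho)^{(\delta)}$ together with linearity, $\Phi_\sigma({\cal E}(\hat\rho))=\sum_\delta e^{-\delta^2/(8\sigma^2)}{\cal E}(\hat\rho^{(\delta)})={\cal E}(\Phi_\sigma(\hat\rho))$, and the same holds for each trace-nonincreasing component ${\cal E}_\alpha$ of a selective free operation. Then (M2a) is just (D3a): $M_\sigma({\cal E}(\hat\rho))=D({\cal E}(\hat\rho),{\cal E}(\Phi_\sigma(\hat\rho)))\le D(\hat\rho,\Phi_\sigma(\hat\rho))=M_\sigma(\hat\rho)$. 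For (M2b) I first check that the trace-matching hypothesis of (D3b) holds with the second argument $\Phi_\sigma(\hat\rho)$: off-diagonal modes are traceless, so $\Tr{\cal E}_\alpha(\hat\rho)^{(\delta)}=0$ for $\delta\ne0$, whence $\Tr{\cal E}_\alpha(\Phi_\sigma(\hat\rho))=\Tr{\cal E}_\alpha(\hat\rho^{(0)})=\Tr{\cal E}_\alpha(\hat\rho)=:p_\alpha$. Applying (D3b), then the commutation relation and linearity, yields $M_\sigma(\hat\rho)\ge\sum_\alpha p_\alpha D({\cal E}_\alpha(\hat\rho)/p_\alpha,\Phi_\sigma({\cal E}_\alpha(\hat\rho)/p_\alpha))=\sum_\alpha p_\alpha M_\sigma({\cal E}_\alpha(\hat\rho)/p_\alpha)$.

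The substantive case is (M4), and I expect the \emph{strict} inequality there to be the one genuinely delicate point, since (D3a) alone only ever delivers a weak inequality. Let $\hat\rho_{ij}$ be the normalized state of $\ket{i}+\ket{j}$ (and $\hat\rho_{kl}$ likewise); then $\hat\rho_{ij}$ and $\Phi_\sigma(\hat\rho_{ij})=\frac12(\ketbra{i}{i}+\ketbra{j}{j}+r_{ij}(\ketbra{i}{j}+\ketbra{j}{i}))$, with $r_{ij}=e^{-(a_i-a_j)^2/(8\sigma^2)}$, both lie in the qubit $\tn{span}\{\ket{i},\ket{j}\}$. If $a_k=a_l$ then $M_\sigma(\hat\rho_{kl})=0<M_\sigma(\hat\rho_{ij})$ by (M1), so assume $a_k\ne a_l$; since $|a_i-a_j|>|a_k-a_l|>0$ we then have $0<r_{ij}<r_{kl}<1$. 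Choosing a unitary $\hat U$ with $\hat U\ket{i}=\ket{k}$ and $\hat U\ket{j}=\ket{l}$, unitary invariance (D2) gives $M_\sigma(\hat\rho_{ij})=D(\tau_1,\tau_{r_{ij}})$ and $M_\sigma(\hat\rho_{kl})=D(\tau_1,\tau_{r_{kl}})$, where $\tau_r:=\frac12(\ketbra{k}{k}+\ketbra{l}{l}+r(\ketbra{k}{l}+\ketbra{l}{k}))$. It remains to show $r\mapsto D(\tau_1,\tau_r)$ is strictly decreasing on $[0,1)$. The trick is to use joint convexity rather than contractivity: for $0\le r'<r<1$ one checks $\tau_r=\frac{r-r'}{1-r'}\tau_1+\frac{1-r}{1-r'}\tau_{r'}$, so (D4) yields $D(\tau_1,\tau_r)\le\frac{1-r}{1-r'}D(\tau_1,\tau_{r'})$, and since the prefactor is strictly less than $1$ while $D(\tau_1,\tau_{r'})>0$ by (D1), the inequality is strict. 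Taking $r'=r_{ij}$ and $r=r_{kl}$ gives $M_\sigma(\hat\rho_{kl})<M_\sigma(\hat\rho_{ij})$, establishing (M4).
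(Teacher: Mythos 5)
Your proposal is correct and follows essentially the same route as the paper's proof: the commutation lemma $\Phi_\sigma\circ\mathcal{E}=\mathcal{E}\circ\Phi_\sigma$ combined with (D3a)/(D3b) for the monotonicity conditions, linearity plus (D4) for convexity, and for (M4) the same convex decomposition of the more-decohered two-level state as a mixture of the pure state and the less-decohered one, with strictness coming from the factor $\tfrac{1-r}{1-r'}<1$ and (D1). Your explicit verification of the trace-matching hypothesis in (M2b) and the separate treatment of the degenerate case $a_k=a_l$ are slightly more careful than the paper's, but the argument is the same.
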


Details and proofs can be found in the Appendix.
Theorem~\ref{Ms} allows us to define a new family of macroscopic quantum coherence measures parametrized by the measurement precision $\sigma$.
In the special case of $\sigma = 0$, this type of measure becomes a measure of coherence with respect to the eigenbasis $\{ \ket{i} \}$ of the observable, suggested in \cite{YadinX}, but does not satisfy (M4) anymore.
The Bures distance $D_B(\hat\rho, \hat\tau) = 2 - 2 \sqrt{ {\cal F}(\hat\rho,\hat\tau) }$ defined in terms of the fidelity between quantum states ${\cal F}({\hat{\rho}}, \hat{\tau}) = [\Tr\sqrt{\sqrt{\hat{\rho}} \hat{\tau} \sqrt{\hat{\rho}}}]^2$ and the quantum relative entropy defined by $S({\hat{\rho}}|| \hat{\tau}) = \Tr{\hat{\rho}}\ln{\hat{\rho}} - \Tr{\hat{\rho}}\ln \hat{\tau}$ are good examples satisfying all the conditions (D1) -- (D4).
For the rest of the paper, we focus on the the measure based on the Bures distance, $M^B_\sigma({\hat{\rho}}) = D_B({\hat{\rho}}, \Phi_\sigma({\hat{\rho}}))$. 

However, we observe that the disturbance-based measure $M_\sigma({\hat{\rho}})$ with certain values of $\sigma$  may lead to unreasonable conclusions even when it satisfies all the conditions in \cite{Yadin16}.
The following example shows that a product of microscopic superpositions  has a larger value of $M_\sigma$ than the Greenberger-–Horne-–Zeilinger(GHZ)-state when $\sigma$ is sufficiently small. This is contrary to our understanding and previous results \cite{Lee11, Frowis12, Park16} that the latter state is clearly in a macroscopic superposition while the former is not.

Consider a magnetization measurement on a system of $N$ spin-$1/2$ particles, of the same type studied by Poulin~\cite{Poulin05}.
The measurement is defined by a Hermitian operator $\hat{M} := \sum_{i=1}^N \hat{s}_z^{(i)}$ where $s_z^{(i)} := \hat{\sigma}_z / 2$ and $\hat{\sigma}_z$ is the standard Pauli Z operator.
The observable $\hat{M}$ represents a collective measurement of the overall spins rather than addressing each individual spin.
We compare the values of quantum macroscopicity measure $M(\hat\rho)$ between two different quantum states, a product state ${\hat{\rho}}_{\rm p}^N = \ket{\Psi^N_{\rm p}}\bra{\Psi^N_{\rm p}}$ with $\ket{\Psi^N_{\rm p}} = ( \cos\theta \ket{0} + \sin\theta\ket{1} ) ^ {\otimes N}$ and the GHZ-state  ${\hat{\rho}}_{\rm GHZ}^N = \ket{\Psi^N_{\rm GHZ}}\bra{\Psi^N_{\rm GHZ}}$ with $\ket{\Psi^N_{\rm GHZ}} = 2^{-1/2} \left( \ket{0}^{\otimes N} +  \ket{1}^{\otimes N} \right)$.
The state $\hat\rho_{\rm p}^N$ is a product of microscopic superpositions (between $\ket{0}$ and $\ket{1}$) and does not contain long range coherence between the spins in the system. Moreover, $\hat\rho_{\rm p}^N$ is a kind of a spin coherent state and its classicality has been studied in Refs.~\cite{Kofler07, Markham03}.
On the other hand, $\hat\rho_{\rm GHZ}^N$ 
could be a typical model of Schr\"odinger's cat state that
two components of the superposition give maximally different outcomes (all spin up, all spin down),
leading to a large variance for the observable $\hat{M}$.
Also it contains multipartite quantum correlation between the spins in the system \cite{Tichy16}.

In order to compare the quantum macroscpicity $M_\sigma^B(\hat\rho)$, we first evaluate the fidelity between the pre- and post-measurement states.
The fidelity for the product state ${\hat{\rho}}_{\rm p}^N$ is given by
${\cal F}({\hat{\rho}}_{\rm p}^N, \Phi_\sigma({\hat{\rho}}_{\rm p}^N)) \approx ({1 + N \sin^2(2\theta) /(8\sigma^2))^{-1/2}}$ 
using the approximation of the binomial distribution to the normal distribution for $N \gg 1$.
On the other hand, in the case of the GHZ-state ${\hat{\rho}}_{\rm GHZ}^N$, we have ${\cal F}({\hat{\rho}}_{\rm GHZ}^N, \Phi_\sigma({\hat{\rho}}_{\rm GHZ}^N)) =  2^{-1} (1 + \exp[{-N^2/(8\sigma^2)}])$.
Note that for small enough values of $\sigma \ll 1$, $M_\sigma({\hat{\rho}}_{\rm p}^{N})$  tends to the maximum value of $2$ for the product state, while $M_\sigma({\hat{\rho}}_{\rm GHZ}^N)$ is $2-\sqrt{2} \approx 0.586$ for the GHZ state (see \Fref{Spin}). This suggests that an accumulation (i.e., direct product) of microscopic superposition is more macroscopically-quantum than a pure superposition of two macroscopically distinct states.
The result clearly demonstrates that
the conditions proposed in \cite{Yadin16} are not sufficient to prescribe a completely consistent measure of macroscopic coherence.

\begin{figure}[t]
\centering
\includegraphics[width=0.6\linewidth]{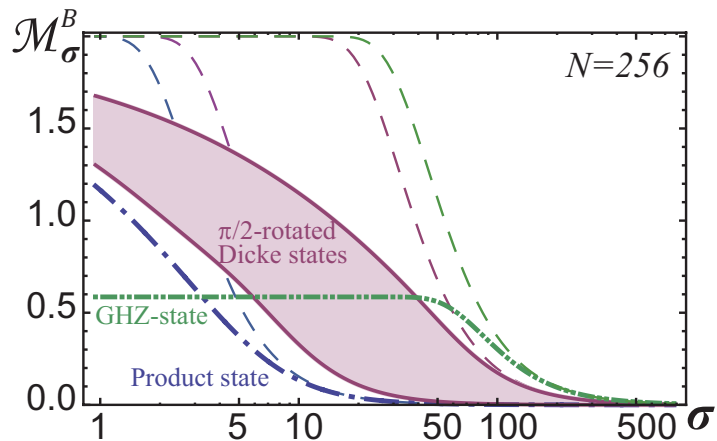}
\caption{Disturbance-based coherence $M_\sigma^B$ for measuring total magnetization of $N$ spin-1/2 system with $N=256$.
A product state (dot-dashed line), a GHZ-state (double-dot-dashed line), and $\pi/2$-rotated Dicke states (solid lines) are investigated.
Upper line on rotated Dicke states (shaded region) refers to $k=N/2$, while lower line refers to $k=1$. Dashed lines refer to the bound given by \eref{Ieq3}.}
\label{Spin}
\end{figure}

\section{Quantifying Macroscopic coherence}
\subsection{Quantum state disturbance and macroscopic coherence}
In order to overcome the issues described in the previous section, we revisit to the basic premise of macroscopic quantumness.
As far back as Schr\"odinger \cite{Schrodinger35}, a system is said to be macroscopic quantum
when each state constructing superposition is distinguished directly by a classical measurement.
In metrology, it is well known that the limit of a classical measurement is given by 
$\sigma \propto \sqrt{N}$ for $N$-particle systems, and quantum resources are necessary to achieve higher efficiencies 
\cite{Giovannetti04, Giovannetti06}.
 Previous studies of coarse-grained measurement similarly argued that the condition $\sigma \gg \sqrt{N}$ allows macroscopic observables to be considered classical \cite{Poulin05, Kofler07}.

The following theorem relating our disturbance-based measure $M^B_\sigma({\hat{\rho}})$ to the Wigner-Yanase-Dyson skew information $I_W({\hat{\rho}},\hat{A}) = (-1/2)\Tr[\sqrt{\hat{\rho}},\hat{A}]^2$ further reinforces our argument.

\begin{theorem} \label{skewInfo}
Coarse-grained measurement disturbance $M^B_\sigma({\hat{\rho}})$ is lower bounded by Wigner-Yanase-Dyson skew information $I_W({\hat{\rho}},\hat{A})$,
\begin{equation}
\label{Ieq2}
M^B_\sigma({\hat{\rho}}) \leq 2\left( 1 - e^{-\frac{I_W({\hat{\rho}}, \hat{A})}{4\sigma^2}}\right).
\end{equation}
For a pure state $\ket{\psi}$, we have
\begin{equation}
\label{Ieq3}
M^B_\sigma(\ket{\psi}) \leq 2 \left( 1 - e^{-\frac{{\rm Var}_{\ket{\psi}}(\hat{A})}{8\sigma^2}} \right),
\end{equation}
where ${\rm Var}_{\ket{\psi}} (\hat{A}) = \bra{\psi} \hat{A}^2 \ket{\psi} - \bra{\psi}\hat{A}\ket{\psi}^2$ is the variance of the observable $\hat{A}$, which is identical to $I_W(\ket{\psi}\bra{\psi}, \hat{A})$ for a pure state.
\end{theorem}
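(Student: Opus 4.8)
The plan is to exploit the fact that the coarse-graining channel $\Phi_\sigma$ is a Gaussian mixture of the phase rotations generated by $\hat{A}$. Since $\int_{-\infty}^{\infty} p(t)\, e^{it\delta}\, dt = e^{-\delta^2/(8\sigma^2)}$ for the normal density $p(t) = (2\sigma/\sqrt{2\pi})\, e^{-2\sigma^2 t^2}$, the representation $\Phi_\sigma({\hat{\rho}})=\sum_{\delta\in\Delta}e^{-\delta^2/(8\sigma^2)}{\hat{\rho}}^{(\delta)}$ given above can be rewritten as $\Phi_\sigma({\hat{\rho}}) = \int_{-\infty}^{\infty} dt\, p(t)\, e^{it\hat{A}}\,{\hat{\rho}}\, e^{-it\hat{A}}$, i.e.\ as a random-unitary channel whose unitaries are symmetries of $\hat{A}$. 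The whole problem then reduces to bounding the fidelity between ${\hat{\rho}}$ and a Gaussian average of its own phase rotations.

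For the general bound \eref{Ieq2}, I would move this average outside the fidelity using joint concavity of the root fidelity $\sqrt{{\cal F}}$ (or, alternatively, operator concavity of $x\mapsto x^{1/2}$ combined with $\sqrt{{\cal F}({\hat{\rho}},\hat\tau)}\geq\Tr({\hat{\rho}}^{1/2}\hat\tau^{1/2})$), and then pass from the trace norm to the trace via $\|X\|_1\geq|\Tr X|$ and the positivity of $\Tr({\hat{\rho}}^{1/2}e^{it\hat{A}}{\hat{\rho}}^{1/2}e^{-it\hat{A}})$:
\[
\sqrt{{\cal F}\big({\hat{\rho}},\Phi_\sigma({\hat{\rho}})\big)} \;\geq\; \int_{-\infty}^{\infty} dt\, p(t)\,\Tr\!\big({\hat{\rho}}^{1/2}\, e^{it\hat{A}}\,{\hat{\rho}}^{1/2}\, e^{-it\hat{A}}\big).
\]
Writing $g(t)$ for the integrand and vectorizing ${\hat{\rho}}^{1/2}$ through the (unnormalized) maximally entangled state identifies $g(t)=\langle\psi|e^{it\hat{L}}|\psi\rangle$, with $\hat{L}:=\hat{A}\otimes\mathbb 1-\mathbb 1\otimes\hat{A}^{\rm T}$ Hermitian and $\langle\psi|\psi\rangle=\Tr{\hat{\rho}}=1$; hence $g$ is the characteristic function of a probability measure $\mu$ (the spectral distribution of $\hat{L}$ in $|\psi\rangle$), and $\int dt\, p(t)\, g(t)=\int e^{-\lambda^2/(8\sigma^2)}\,d\mu(\lambda)$. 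Convexity of $x\mapsto e^{-x}$ with Jensen's inequality then gives $\int dt\, p(t)\, g(t)\geq e^{-\langle\psi|\hat{L}^2|\psi\rangle/(8\sigma^2)}$, and a short computation using the same vectorization identities ($\langle\psi|\hat{A}^2\otimes\mathbb 1|\psi\rangle=\langle\psi|\mathbb 1\otimes(\hat{A}^2)^{\rm T}|\psi\rangle=\Tr({\hat{\rho}}\hat{A}^2)$ and $\langle\psi|\hat{A}\otimes\hat{A}^{\rm T}|\psi\rangle=\Tr({\hat{\rho}}^{1/2}\hat{A}{\hat{\rho}}^{1/2}\hat{A})$) yields $\langle\psi|\hat{L}^2|\psi\rangle=2[\Tr({\hat{\rho}}\hat{A}^2)-\Tr({\hat{\rho}}^{1/2}\hat{A}{\hat{\rho}}^{1/2}\hat{A})]=2\,I_W({\hat{\rho}},\hat{A})$. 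Therefore $\sqrt{{\cal F}({\hat{\rho}},\Phi_\sigma({\hat{\rho}}))}\geq e^{-I_W({\hat{\rho}},\hat{A})/(4\sigma^2)}$, and \eref{Ieq2} follows from $M^B_\sigma({\hat{\rho}})=2-2\sqrt{{\cal F}}$.

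For a pure state $|\psi\rangle$ the concavity step is unnecessary and one gains a factor of $2$ in the exponent. Here ${\cal F}(|\psi\rangle,\hat\tau)=\langle\psi|\hat\tau|\psi\rangle$ exactly, so ${\cal F}(|\psi\rangle,\Phi_\sigma(|\psi\rangle\langle\psi|))=\int dt\, p(t)\,|\langle\psi|e^{it\hat{A}}|\psi\rangle|^2=\mathbb E_{a,a'}\!\big[e^{-(a-a')^2/(8\sigma^2)}\big]$, where $a,a'$ are independent draws from the spectral distribution of $\hat{A}$ in $|\psi\rangle$. Applying Jensen directly to ${\cal F}$ (rather than to $\sqrt{{\cal F}}$) gives ${\cal F}\geq e^{-\mathbb E[(a-a')^2]/(8\sigma^2)}=e^{-2\,{\rm Var}_{|\psi\rangle}(\hat{A})/(8\sigma^2)}=e^{-{\rm Var}_{|\psi\rangle}(\hat{A})/(4\sigma^2)}$, and only then do I take the square root, obtaining $\sqrt{{\cal F}}\geq e^{-{\rm Var}_{|\psi\rangle}(\hat{A})/(8\sigma^2)}$, which is \eref{Ieq3}. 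Since $I_W=\,{\rm Var}$ for pure states, \eref{Ieq3} is strictly tighter than \eref{Ieq2}, which is precisely why the pure case requires this separate argument.

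The main obstacle, I expect, is twofold: finding the Gaussian-mixture representation of $\Phi_\sigma$, and then choosing the correct convexity statement so that Jensen's inequality acts on the right object --- for mixed states one is forced down to the root fidelity, producing the exponent $I_W/(4\sigma^2)$, whereas for pure states one can bound ${\cal F}$ itself and take the square root afterwards, producing the better ${\rm Var}/(8\sigma^2)$. The remaining pieces --- the identity $\langle\psi|\hat{L}^2|\psi\rangle=2\,I_W$ and the check that $g$ is even (so that $\mu$ is symmetric and the integral real) --- are routine vectorization computations.
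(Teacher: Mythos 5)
Your proof is correct, and both of your bounds match the paper's. The pure-state argument is essentially identical to the paper's: you expand ${\cal F}=\sum_{i,j}p_ip_j\,e^{-(a_i-a_j)^2/(8\sigma^2)}$ and apply Jensen's inequality to the probability distribution $p_ip_j$, noting $\mathbb E[(a-a')^2]=2\,{\rm Var}$; the paper does exactly this (integrating over the pointer variable $x$ rather than your conjugate variable $t$, which is just Fubini). For the mixed-state case you take a mildly different route to the same intermediate expression $\sqrt{{\cal F}}\geq\sum_{i,j}e^{-(a_i-a_j)^2/(8\sigma^2)}|(\sqrt{\rho})_{ij}|^2$: you first rewrite $\Phi_\sigma$ as a Gaussian random-unitary (twirling) channel and invoke concavity of the root fidelity in its second argument together with $\|X\|_1\geq|\Tr X|$, whereas the paper keeps the Kraus form $\{\hat Q_x^\sigma\}$ and combines the Araki--Lieb--Thirring bound $\sqrt{{\cal F}(\hat\rho,\hat\tau)}\geq\Tr\sqrt{\hat\rho}\sqrt{\hat\tau}$ with the operator Jensen inequality for a unital Kraus set --- precisely the alternative you mention parenthetically. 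Your concavity route is arguably the cleaner of the two, since for unitary conjugates $\sqrt{{\cal F}(\hat\rho,\hat U\hat\rho\hat U^\dagger)}=\|\sqrt{\hat\rho}\,\hat U\sqrt{\hat\rho}\,\hat U^\dagger\|_1$ makes the trace bound elementary and sidesteps the operator Jensen inequality entirely; the final Jensen step is then the same in both treatments, with your vectorized identity $\langle\psi|\hat L^2|\psi\rangle=2I_W$ being a repackaging of the paper's explicit identity $I_W=(1/2)\sum_{i,j}(a_i-a_j)^2|(\sqrt{\rho})_{ij}|^2$. Your closing observation about why the pure case must be handled separately (Jensen applied to ${\cal F}$ rather than $\sqrt{{\cal F}}$ buys the factor of two in the exponent) is also the correct reading of the paper's structure.
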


The above inequality reflects the intuition that 
the more precise the measurements and the more coherence present within the system,
the more the measurement will disturb the quantum state.

A previous study \cite{Frowis12} argued that scaling of the quantum Fisher information with the number of particles $N$ characterizes whether a $N$-particle system is macroscopically quantum.
Moreover, the Wigner-Yanase-Dyson skew information is a closely related with the quantum Fisher information due to the following relation \cite{Luo04}
\begin{equation}
\label{FWbound}
4 I_W({\hat{\rho}}, \hat{A}) \leq I_F({\hat{\rho}}, \hat{A}) \leq 8 I_W({\hat{\rho}}, \hat{A}),
\end{equation}
where the quantum Fisher information is given by $I_F({\hat{\rho}}, \hat{A}) = 2 \sum_{i \neq j} (\lambda_i - \lambda_j)^2/(\lambda_i + \lambda_j) |\langle \psi_i |\hat{A} | \psi_j \rangle|^2$ for eigendecomposition of ${\hat{\rho}} = \sum_i \lambda_i \ket{\psi_i} \bra{\psi_i}$.
We then note that the inequality (\ref{Ieq2}) relates our measure to the previous suggested measure of quantum macroscopicity based on the quantum Fisher information \cite{Frowis12}.
According to
\cite{Frowis12}, quantum states with $I_F({\hat{\rho}},\hat{A}) = O(N^1)$ can be interpreted as classical(or at least microscopic quantum) while the states with $I_F({\hat{\rho}},\hat{A}) = O(N^2)$ may be considered macroscopic quantum.

It is worth mentioning that a similar inequality was recently derived in a separate study of macrorealism based on the Leggett-Garg inequality \cite{Frowis16} as
\begin{equation}
\label{Ieq4}
\sqrt{{\cal F}(\hat\rho, \Phi_\sigma(\hat\rho))} \geq B_F = e^{- \frac{I_F (\hat\rho, \hat{A})} {32 \sigma^2}} - {\rm erfc}\left( \frac{\sqrt{2} \pi \sigma}{\sqrt{I_F (\hat\rho, \hat{A})}} \right)
\end{equation}
while the inequality \eref{Ieq2} can be expressed as $\sqrt{{\cal F}(\hat\rho, \Phi_\sigma(\hat\rho))} \geq B_W = e^{-\frac{I_W(\hat\rho,\hat{A})}{4\sigma^2}}$.
We point out that the bounds $B_F$ give negative values when $I_F/\sigma^2 \geq 37.806$, which leads to the trivial bound $B_F < 0 \leq \sqrt{{\cal F}(\hat\rho, \Phi_\sigma(\hat\rho))}$, while our bound $B_W$ is positive for any $I_W$ and $\sigma$.
In the case of a pure state $\ket{\psi}$, the bound given by \eref{Ieq3} is always tighter than the bound \eref{Ieq4} given by the quantum Fisher information, since $I_F(\ket{\psi}, \hat{A})  = 4 {\rm Var}_{\ket{\psi}} (\hat{A})$.
In the case of a mixed state, the bound $B_F$ seems tighter than $B_W$ when $\sigma$ is large. However, in some regions of small $\sigma$, $B_W$ could be tighter than $B_F$.
A detailed analysis with an example is presented in \ref{AppendixC}.

\subsection{Examples in spin and bosonic systems}
 Theorem~\ref{skewInfo} naturally manifests itself in the disturbance-based measure.
Provided the level of coarse graining is chosen to be $\sigma \propto \sqrt{N}$,
a state with $I_W({\hat{\rho}},\hat{A}) = O(N^1)$ will result in a measurement disturbance close to zero.
For example, the macroscopic coherence for a product of microscopic quantum states ${\hat{\rho}}^{\otimes N}$ is close to zero according to our measure, since the Wigner-Yanase-Dyson skew information scales with the order of $O(N)$.
In contrast, a non-classical skew information $I_W({\hat{\rho}},\hat{A}) = O(N^2)$, for example in the case of a GHZ state, allows the measure $M_\sigma^B({\hat{\rho}})$ to reach its maximum value of $2$ for $N \gg 1$.
This observation allows us to circumvent the inconsistency observed in the previous section.
We will therefore impose the classical 
limit $\sigma = \sqrt{N}$ as the appropriate level of coarse graining for our disturbance based measure.

Another example in the spin system is a rotated Dicke state given by $\hat{R}_{\theta,\phi} \ket{N,k}$, where $\ket{N,k} = {N \choose k}^{-1/2} \sum_P P( |\underbrace{0 \cdots 0}_{N-k} \underbrace{1\cdots 1}_{k}  \rangle)$ is a sum overall all symmetric permutations $P$, and $\hat{R}_{\theta, \phi} = e^{\xi \hat{J}_+ - \xi^* \hat{J}_- }$ is the rotation operator with $\hat{J}_\pm = \sum_{i=1}^{N} (\sigma_x^{(i)} \pm i \sigma_y^{(i)})$ and  $\xi = \theta e^{i\phi} /2 $. 
In the case of $\theta = \pi/2$ and $\phi = 0$, the macroscopic coherence of the state depends on  the excitation number $k$.
Such a state becomes a product state when $k = 0$ or $k = N$.

\Fref{Spin} compares the behavior of $M_\sigma^B$ between rotated Dicke, GHZ and product states for varying levels of the coarse graining parameter $\sigma$. 
We also observe that at the classical limit of $\sigma = \sqrt{N}$, rotated Dicke states with excitation number $k\approx N/2$ result in higher levels of macroscopic coherence $M^B_\sigma$ than the GHZ-state. 
This property does not persist however, if we were to continue decreasing the amount of measurement precision (i.e. increase $\sigma$). For a sufficiently large $\sigma$, the GHZ-state tends to have the highest level of macroscopic coherence among all the states considered.
Our disturbance-based measure appears to capture ideas from both the more general quantum coherence measures given by \cite{Baumgratz14} and the macroscopic coherence measures based on the variance of the observable \cite{Lee11, Frowis12, Girolami14, Yadin15} since it encodes information about {\it how many  states are currently in superposition} as well as {\it how far apart these superposed states are} with respect to the given measurement observable and the measurement precision $\sigma$.

\begin{figure}[t]
\centering
\includegraphics[width=0.6\linewidth]{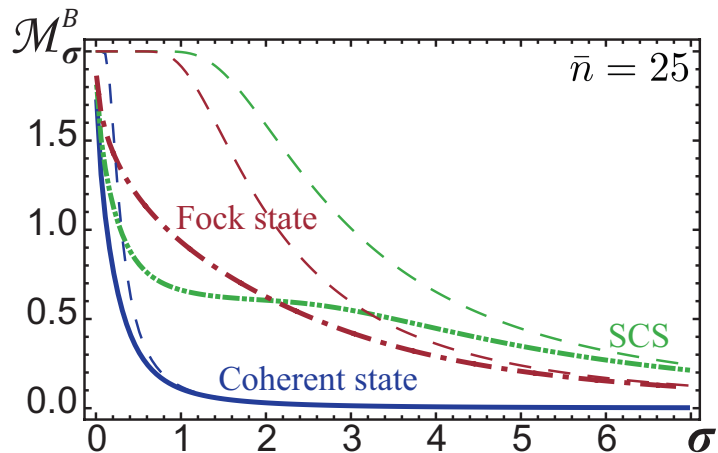}
\caption{
Disturbance-based coherence measure $M_\sigma^B$ for quadrature measurement for bosonic system with the same mean particle number $\bar{n} = 25$. 
A Fock state (dot-dashed line), a superposition of coherent states (double-dot-dashed line), and a coherent state (solid line) are investigated.
Dashed lines refer to the bound given by \eref{Ieq3}.}
\label{Boson}
\end{figure}

We also apply the disturbance-based measure to bosonic systems described by the annihilation operator $\hat{a}$ and the creation operator $\hat{a}^\dagger$.
Since a bosonic system can contain many particles in a single mode, the system may be considered macroscopic when the mean particle number $\bar{n} = \langle \hat{a}^\dagger \hat{a} \rangle$ is large.
In this case, the particle number $\hat{n} = \hat{a}^\dagger \hat{a}$ and the quadrature  $\hat{X}_\theta = (e^{-i\theta} \hat{a} + e^{i\theta} \hat{a}^\dagger )/\sqrt{2}$ are natural candidates for measurement observables.
We now consider the value of $M_\sigma^B$ with respect to an $X$-quadrature measurement $\hat{X} = (\hat{a} + \hat{a}^\dagger)/\sqrt{2}$.
\Fref{Boson} shows the disturbance-based measure $M_\sigma^B$ for typical states of a bosonic system.
Again, we see that for small values of $\sigma$, a bosonic coherent state $\ket{\alpha}$ contains non-trivial macroscopic quantumness, which are not in agreement with our understanding. However, $M_\sigma^B$ rapidly decreases with $\sigma$ and becomes essentially zero at the imposed classical limit of $\sigma \approx 1 $.
This makes sense when we note that bosonic coherent states are the most classical states among all pure states \cite{Mandel86, Zurek93}
and the classical measurement is based on electric (or magnetic) fields which are proportional to $\hat{X}_\theta$.
In the case of the $\hat{X}$ measurement for coherent states,
the noise is given by ${\rm Var} (\hat{X}) = 1/2$, while the signal is given by $\langle \hat{X} \rangle \sim \sqrt{\bar{n}}$.
Then the signal to noise ratio $\langle \hat{X} \rangle / \sqrt{{\rm Var} (\hat{X}) }$ scales by $\sqrt{\bar{n}}$ which corresponds to the measuring of the magnetization $\hat{M}$ for the spin system with spin coherent states, $\langle \hat{M} \rangle / \sqrt{{\rm Var} (\hat{M})} \propto \sqrt{N}$.
Based on this, the noise corresponding to the bosonic system, $\sigma \sim \sqrt{{\rm Var} (\hat{X}) } \sim 1$, would be a proper choice of the classical limit.

We also evaluate the values of $M_\sigma^B$ for a superposition of coherent states (SCS) $\ket{\alpha} + \ket{-\alpha}$ and the Fock state $\ket{n}$.
In the phase space, the distance between  two bosonic coherent states, $\ket{\alpha}$ and $\ket{-\alpha}$, becomes greater when amplitude $\alpha$ becomes larger. The two coherent states can then be distinguishable by a ``classical-like'' measurement such as  a homodyne detection with a large degree of imprecision.
Thus, a SCS for $\alpha\gg1$ is often exemplified as a typical example of a macroscopic superposition and even called a Schr\"odinger cat state.
It may not be immediately clear whether Fock states $\ket{n}$ are macroscopic superpositions. However, in the coherent state representation, a Fock state of $n\gg1$ can also be understood as a superposition of many coherent states where the coherent states are far separate in the phase space. So, they may be possible candidates for macroscopic superpositions when $n\gg1$.

 In comparison to coherent states, a SCS and the Fock state give non-trivial values of $M_\sigma^B$ at the classical limit of $\sigma$ (see \Fref{Boson}).
All these observations are compatible with the common expectation that coherent states are classical, while SCS and the Fock states are considered macroscopically quantum.

\subsection{Connection to a decoherence model}
Decoherence in a particular basis can be regarded as a measurement performed by the environment \cite{Sekatski14PRL}.
Based on this concept, we may consider a connection between the quantum macroscopicity measure in the present work and the fragility of a quantum state by a certain type of decoherence.
We show that a coarse-grained measurement of observable $\hat{A}$
can be equivalently modeled by a decoherence process under linear coupling between the system observable $\hat{A}$ and the environment operator $\hat{p}_E$.
After time $t$, the initial state of the system $\hat\rho_0$ evolves into
\begin{equation}
\hat\rho(t) = \Tr_E  e^{-i g \hat{A} \hat{p}_E t}  \left( \hat\rho_0 \otimes \hat\tau_E \right) e^{i g \hat{A} \hat{p}_E t},
\end{equation}
where $\hat\tau_E$ is the initial state of the environment and $g$ is a coupling constant.
By taking the eigenstates of $\hat{p}_E$ to be $\ket{p}$ with continuous variable $p$, we get
\begin{equation}
\hat\rho(t) = \sum_{\delta \in \Delta} \hat\rho^{(\delta)}_0 \left[ \int dp  e^{-i (g t)   p \delta  } \bra{p} \hat\tau_E \ket{p} \right].
\end{equation}
Now we choose the environment state to have
$\bra{p} \hat\tau_E \ket{p} \propto e^{-\mu^2 p^2}$ so that
$\hat\rho(t) = \sum_{\delta \in \Delta} \hat\rho^{(\delta)}_0 e^{-\frac{(gt)^2 \delta^2}{4\mu^2}}$.
In this case, the state distance between the initial and final states $D(\hat\rho_0, \hat\rho(t))$, which indicates the fragility of the initial quantum state under this kind of decoherence, is exactly the state disturbance $M_\sigma(\hat\rho_0)$ caused by a coarse-grained measurement for the corresponding value of $\sigma = \mu / (\sqrt{2} gt)$.

For example, we suppose that the environment is in a thermal state $\hat\tau_E = e^{-\beta \hat{H}_E} / Z_E$, where $Z_E = \Tr e^{-\beta \hat{H}_E}$ and $\beta = (k_B T)^{-1}$ is an inverse temperature.
For simplicity, we further assume that the thermal bath is a single-mode harmonic oscillator with hamiltonian $\hat{H}_E = \hbar \omega \left( \hat{a}^\dagger_E \hat{a}_E + \frac{1}{2} \right)$ and the coupling with the system is given by the momentum operator, $\hat{p}_E= (\hat{a}_E - \hat{a}^\dagger_E)/(\sqrt{2} i)$.
In this case, we have
$\bra{p} \hat\tau_E \ket{q}  \propto \exp[- \tanh{(\beta\omega/2)} \hat{p}_E^2]$,
and $\tanh(\beta\omega/2)/(g^2 t^2)$ corresponds to $2 \sigma^2$ in the coarse-grained measurement.
We then see that large values of $\sigma$ correspond to short decoherence times, weak coupling and/or low bath temperatures.
In other words, a quantum state with a large value of $M_\sigma(\hat\rho)$ for the classical limit of $\sigma$ 
is easily decohered by a thermal environment.

\section{Conclusion}
We proposed a disturbance-based measure of macroscopic coherence through coarse grained-measurements.
Our argument stems from physical grounds that a precise measurement will affect all the coherence present in the system, while a sufficiently imprecise measurement will affect only the portion of the coherence between classically distinct states.
We demonstrated that our disturbance-based measure satisfies 
a series of properties to quantify macroscopic coherence 
laid out in \cite{Yadin16}.
In the process, we pointed out that 
conditions for macroscopic coherence proposed in \cite{Yadin16}
is insufficient to yield consistent results
without additional constraints.
This inconsistency can be overcome by fixing the level of coarse-graining to an appropriate classical limit.
We also demonstrated an inequality relating the measurement-induced disturbance and the Wigner-Yanase-Dyson skew information 
and argued that this kind of classical limit is necessary to produce a reliable measure of macroscopic coherence.
Furthermore, we establish the direct connection between the disturbance-based quantum macorscopicity measure and the fragility of a quantum state under decoherence.

We emphasize that the proposed measure provides an operational point of view on macroscopic quantumness that can be quantified by the degree of disturbance throughout the measurement of a given imprecision.
The imprecision of the measurement allows us to focus on the coherence between macroscopically distinct states by blurring the interference below the measurement resolution.
We can thus identify whether the quantum state is in a macroscopic superposition by investigating the state disturbance throughout the measurement only with a macroscopic resolution.
As we have demonstrated for both spin and bosonic systems, our approach is not limited to a specific quantum system but can be applied to arbitrary macroscopic observables and quantum systems with large particle numbers.
We expect that the viewpoint concerning the state disturbance induced by coarse-grained measurement may lead to greater insights on macroscopic quantum effects and coherence.

\section*{Acknowledgment}
This work was supported by the National Research Foundation of Korea (NRF) through a grant funded by the Korea government (MSIP) (Grant No.~2010-0018295) and  by the KIST Institutional Program (Project No.~2E26680-16-P025). H. K. was supported by the Global Ph.D. Fellowship Program through the NRF funded by the Ministry of Education (Grant No. 2012-003435).
 K.C. Tan was supported by Korea Research Fellowship Program through the National Research Foundation of Korea (NRF) funded by the Ministry of Science, ICT and Future Planning (Grant No. 2016H1D3A1938100).

\appendix
\section{Proof of Theorem.~\ref{Ms}}
In this section, we prove that for a distance measure satisfying (D1) -- (D4), $M_\sigma(\hat\rho)  = D (\hat\rho, \Phi_\sigma(\hat\rho))$
satisfies the conditions (M1) -- (M4).
We first prove the following proposition:
\begin{proposition} 
\label{EP}
$\delta$-coherence preserving operation ${\cal E}$ commutes with any coarse-grained measurement process $\Phi_\sigma$ for any state ${\hat{\rho}}$, i.e.
\begin{equation}
({\cal E} \circ \Phi_\sigma) ({\hat{\rho}}) = (\Phi_\sigma \circ {\cal E}) ({\hat{\rho}}).
\end{equation}
\end{proposition}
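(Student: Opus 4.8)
The plan is to reduce the claim to the defining property of $\delta$-coherence preserving operations by writing $\Phi_\sigma$ in its mode-decomposed form and using linearity of $\cal E$. Concretely, recall from the construction of the coarse-grained measurement that
\begin{equation}
\Phi_\sigma({\hat{\rho}}) = \sum_{\delta \in \Delta} e^{-\delta^2/(8\sigma^2)}\, {\hat{\rho}}^{(\delta)},
\end{equation}
where ${\hat{\rho}}^{(\delta)} = \sum_{a_i - a_j = \delta} \rho_{ij}\ket{i}\bra{j}$ and $\Delta = \{a_i - a_j\}$. First I would compute $({\cal E}\circ\Phi_\sigma)({\hat{\rho}})$ by pulling ${\cal E}$ through this sum using linearity, obtaining $\sum_{\delta\in\Delta} e^{-\delta^2/(8\sigma^2)}\,{\cal E}({\hat{\rho}}^{(\delta)})$.

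Next I would invoke the definition of a $\delta$-coherence preserving (free) operation, ${\cal E}({\hat{\rho}}^{(\delta)}) = {\cal E}({\hat{\rho}})^{(\delta)}$, term by term, turning the expression into $\sum_{\delta\in\Delta} e^{-\delta^2/(8\sigma^2)}\,{\cal E}({\hat{\rho}})^{(\delta)}$. Since $\hat{A}$ (and hence the mode set $\Delta$) is fixed throughout and ${\cal E}$ acts on the system described by the same observable, the mode decomposition of the output state ${\cal E}({\hat{\rho}})$ is taken with respect to the same $\{\ket{i}\}$, so this last sum is precisely $\Phi_\sigma\big({\cal E}({\hat{\rho}})\big) = (\Phi_\sigma\circ{\cal E})({\hat{\rho}})$, which is the desired identity. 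A small consistency point worth recording is that ${\cal E}({\hat{\rho}})^{(\delta)}$ is by construction supported entirely on mode $\delta$, so the defining condition indeed implies that ${\cal E}$ does not mix distinct modes; this guarantees that the two mode expansions line up term by term rather than merely summing to the same operator.

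The main obstacle is not conceptual but a matter of justifying the interchange of ${\cal E}$ with the (possibly infinite) sum over $\delta$, which is relevant for bosonic systems where the spectrum of $\hat{A}$ is unbounded. I would handle this by noting that the mode series converges in trace norm (the partial sums are obtained from ${\hat{\rho}}$ by a completely positive, hence trace-norm contractive, truncation map) and that ${\cal E}$, being completely positive and trace-nonincreasing, is bounded and therefore continuous in trace norm; the interchange is then legitimate. Everything else is immediate from linearity and the definition, so the proof is short.
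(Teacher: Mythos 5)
Your proof is correct and follows essentially the same route as the paper's: expand $\Phi_\sigma({\hat{\rho}})=\sum_{\delta\in\Delta}e^{-\delta^2/(8\sigma^2)}{\hat{\rho}}^{(\delta)}$, push ${\cal E}$ through the sum by linearity, apply the defining property ${\cal E}({\hat{\rho}}^{(\delta)})={\cal E}({\hat{\rho}})^{(\delta)}$ term by term, and resum. The only addition is your remark on trace-norm convergence for unbounded spectra, which the paper omits but which is a harmless (and reasonable) extra precaution.
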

\begin{proof}
By using the property of the free operation ${\cal E}$,
\begin{equation}
\eqalign{
({\cal E} \circ \Phi_\sigma) ({\hat{\rho}}) &= \sum_{\delta \in \Delta} e^{-\frac{\delta^2}{8\sigma^2}} {\cal E} ({\hat{\rho}}^{(\delta)}) \\
 &= \sum_{\delta \in \Delta} e^{-\frac{\delta^2}{8\sigma^2}} {\cal E} ({\hat{\rho}})^{(\delta)} \\
&= (\Phi_\sigma \circ {\cal E}) ({\hat{\rho}}).}
\end{equation}
\end{proof}

We now prove for conditions (M1) -- (M4).
\begin{proof}
(M1) Note that $M_\sigma({\hat{\rho}}) = D(\hat\rho, \Phi_\sigma(\hat\rho)) = 0 $ if and only if ${\hat{\rho}} = \Phi_\sigma ({\hat{\rho}})$. This is only achieved when $\Phi_\sigma$ is given by a convex sum of projections $\hat{P}_n = \sum_{a_i = n} \ket{i}\bra{i}$, thus this condition can be achieved when ${\hat{\rho}} = {\hat{\rho}}^{(0)}$.

(M2a) By using Proposition.~\ref{EP}, we show that
\begin{equation}
\eqalign{
M_\sigma ({\cal E} ({\hat{\rho}}) ) &= D( {\cal E} ({\hat{\rho}}), (\Phi_\sigma \circ {\cal E}) ({\hat{\rho}})) \\
&= D( {\cal E} ({\hat{\rho}}), ({\cal E} \circ \Phi_\sigma ) ({\hat{\rho}})) \\
&\leq D( {\hat{\rho}}, \Phi_\sigma ({\hat{\rho}})) \\
&=M_\sigma({\hat{\rho}})}
\end{equation}
for trace-preserving free operation ${\cal E}$.

(M2b) Similarly, by using Proposition.~\ref{EP} and the condition (D3b), we show that
\begin{equation}
\eqalign{
\sum_\alpha p_\alpha M_\sigma ({\cal E}_\alpha ({\hat{\rho}})/p_\alpha)
&= \sum_\alpha p_\alpha D({\cal E}_\alpha ({\hat{\rho}})/p_\alpha,  \Phi_\sigma ({\cal E}_\alpha ({\hat{\rho}})/p_\alpha))\\
&= \sum_\alpha p_\alpha D({\cal E}_\alpha ({\hat{\rho}})/p_\alpha,  {\cal E}_\alpha ( \Phi_\sigma({\hat{\rho}}))/p_\alpha)\\
&\leq D ({\hat{\rho}}, \Phi_\sigma({\hat{\rho}}))\\
&=M_\sigma({\hat{\rho}}),}
\end{equation}
where $\Tr{\cal E}_\alpha(\Phi_\sigma(\hat\rho)) = \Tr \Phi_\sigma ({\cal E}_\alpha(\hat\rho)) = \Tr {\cal E}_\alpha(\hat\rho) =p_\alpha$, since $\Phi_\sigma$ is a trace preserving map.

(M3) Convexity can be directly proven by using joint convexity of the distance measure,
$M_\sigma(\sum_i p_i \hat\rho_i) = D(\sum_i p_i \hat\rho_i,\Phi_\sigma(\sum_i p_i \hat\rho_i)) = D(\sum_i p_i \hat\rho_i, \sum_i p_i\Phi_\sigma(\hat\rho_i)) \leq \sum_i p_i D(\hat\rho_i, \Phi_\sigma(\hat\rho_i)) = \sum_i p_i M_\sigma(\hat\rho_i)$.

(M4) 
Now we prove that there is an ordering of $M$ between two states
$\ket{\psi_0} = (\ket{i} + \ket{j})/\sqrt{2}$ and $\ket{\psi_1} = (\ket{k} + \ket{l})/\sqrt{2}$, i.e. $M(\ket{\psi_0}) > M(\ket{\psi_1})$ in the case of $|a_i - a_j| > |a_k - a_l|$.
Note that we can always choose the unitary operation $\hat{U}$, which transforms the bases $\ket{k} \rightarrow \ket{i}$ and $\ket{l} \rightarrow \ket{j}$.
Then in $\{\ket{i}, \ket{j}\}$ basis, we can write the states,
$
\hat\rho_0 = \ket{\psi_0}\bra{\psi_0} = \frac{1}{2} \left( 
\eqalign{
1 ~~ 1 \\
1 ~~ 1 
}
\right)
$, $\Phi_\sigma (\hat\rho_{0}) = \frac{1}{2} \left( 
\eqalign{
1 ~~ K \\
K ~~ 1 
}
\right)
$, 
$
\hat{U} \hat\rho_1 \hat{U}^\dagger = \hat\rho_0
$, and 
$
\hat{U} \Phi_\sigma (\hat\rho_1) \hat{U}^\dagger = \frac{1}{2} \left( 
\begin{array}{cc}
1 & J \\
J & 1 
\end{array}
\right)
$, where $K = e^{-(a_i-a_j)^2/(8\sigma^2)}$ and $J = e^{-(a_k-a_l)^2/(8\sigma^2)}$, respectively.
Since $|a_i - a_j| > |a_k - a_l|$,  $0 \leq K < J \leq 1$ for any $\sigma>0$, then we can choose $0 < \lambda = \frac{1-J}{1-K} <1$ such that
$ \hat{U} \Phi_\sigma (\hat\rho_1) \hat{U}^\dagger = \lambda \Phi_\sigma(\hat{\rho}_0)  + (1-\lambda) \hat\rho_0 $.
Then by the unitary invariance and joint convexity of the distance measure $D$, we have 
\begin{equation}
\eqalign{
M(\hat\rho_1)  
& =D(\hat\rho_1, \Phi_\sigma(\hat\rho_1)) \\
& = D(\hat{U} \hat\rho_1 \hat{U}^\dagger, \hat{U} \Phi_\sigma(\hat\rho_1) \hat{U}^\dagger) \\
& = D(\lambda \hat\rho_0 + (1- \lambda) \hat\rho_0, \lambda \Phi_\sigma(\hat{\rho}_0)  + (1-\lambda) \hat\rho_0) \\
&\leq \lambda D(\hat\rho_0, \Phi_\sigma(\hat\rho_0)) + (1-\lambda) D(\hat\rho_0,\hat\rho_0) \\
&= \lambda D(\hat\rho_0, \Phi_\sigma(\hat\rho_0))\\
& < D(\hat\rho_0, \Phi_\sigma(\hat\rho_0)) \\
& = M(\hat\rho_0),}
\end{equation}
which completes the proof.
\end{proof}

We also note that if a distance measure $D(\hat\rho,\hat\sigma)$ satisfies all conditions (D1)--(D4) without the property (D3b), the macroscopicity measure $M_\sigma(\hat\rho) = D(\hat\rho,\Phi_\sigma(\hat\rho))$ based on $D$ satisfies (M1)--(M4) except (M2b).

In the case of the Bures distance, $D_B(\hat\rho, \hat\sigma) = 2 - 2 \sqrt{ {\cal F}(\hat\rho,\hat\sigma) }$, (D1), (D2), (D3a), and (D4) can be easily proven by using the properties of the fidelity \cite{NielsenChuang}.
We can also prove the condition (D3b) by follows. 
\begin{equation}
\eqalign{
\sum_\alpha p_\alpha D_B({\cal E}_\alpha (\hat\rho)/p_\alpha, {\cal E}_\alpha(\hat\sigma)/p_\alpha)
&= 2 \sum_\alpha p_\alpha \left(1- \sqrt{{\cal F}({\cal E}_\alpha (\hat\rho)/p_\alpha, {\cal E}_\alpha(\hat\sigma)/p_\alpha)}\right)\\
&= 2 \left(1 - \sum_\alpha \sqrt{{\cal F}({\cal E}_\alpha (\hat\rho), {\cal E}_\alpha(\hat\sigma))} \right),
}
\end{equation}
by using $\sqrt{ {\cal F}(\hat\rho/p, \hat\sigma/p) } = \sqrt{ {\cal F}(\hat\rho, \hat\sigma) } / p$.
In order to complete the proof, we prove that 
$\sqrt{{\cal F}({\hat{\rho}}, \hat\sigma)} \leq \sum_\alpha \sqrt{{\cal F}({\cal E}_\alpha ({\hat{\rho}}), {\cal E}_\alpha (\hat\sigma))}$ when $\sum_\alpha {\cal E}_\alpha = {\cal E}$.
Note that ${\cal E}_\alpha({\hat{\rho}})$ can be expressed using ancillary state $\hat{\tau}_2$: ${\cal E}_\alpha ({\hat{\rho}}) = \Tr_2 ({\mathbb 1} \otimes \hat{\Pi}_\alpha) \hat{U} ({\hat{\rho}} \otimes \hat{\tau}_2) \hat{U}^\dagger  ({\mathbb 1} \otimes \hat{\Pi}_\alpha)$.
Note that fidelity is non-decreasing under partial trace $\sqrt{{\cal F}({\hat{\rho}}_{12}, \hat{\sigma}_{12})} \leq \sqrt{{\cal F}(\Tr_2 {\hat{\rho}}_{12}, \Tr_2 \hat{\sigma}_{12})}$ and satisfies following properties for a set of projection operators $\{ \hat{\Pi}_n \}$: 
$\sum_n \sqrt{{\cal F} (\hat{\Pi}_n {\hat{\rho}} \hat{\Pi}_n, \hat{\Pi}_n \hat{\sigma} \hat{\Pi}_n)} = \sqrt{{\cal F} (\sum_n \hat{\Pi}_n {\hat{\rho}} \hat{\Pi}_n, \sum_n \hat{\Pi}_n \hat{\sigma} \hat{\Pi}_n)}$.
Using these properties we can show that
$\sum_\alpha \sqrt{{\cal F}({\cal E}_\alpha ({\hat{\rho}}), {\cal E}_\alpha (\hat{\sigma}))} \geq \sqrt{{\cal F}(\hat{U} ({\hat{\rho}} \otimes \hat{\tau}_2) \hat{U}^\dagger,\hat{U} (\hat{\sigma} \otimes \hat{\tau}_2) \hat{U}^\dagger)} = \sqrt{{\cal F}({\hat{\rho}}, \hat{\sigma})}$, since fidelity is invariant under unitary operations.

The conditions (D1)--(D4) for the relative entropy $S({\hat{\rho}}||{\hat{\sigma}})$ can be proved similarly. (D1), (D2), (D3a), and (D4) directly comes from the elementary properties of relative entropy \cite{NielsenChuang}.
(D3b) can be proved by noting that 
$\sum_\alpha p_\alpha S({\cal E}_\alpha (\hat\rho)/p_\alpha || {\cal E}_\alpha(\hat\sigma)/p_\alpha) = \sum_\alpha S({\cal E}_\alpha (\hat\rho) || {\cal E}_\alpha(\hat\sigma) )$.
Then (D3b) can be proved a same argument above by using the property $\sum_n S(\hat{\Pi}_n {\hat{\rho}} \hat{\Pi}_n||\hat{\Pi}_n \hat\sigma \hat{\Pi}_n) = S(\sum_n \hat{\Pi}_n {\hat{\rho}} \hat{\Pi}_n|| \sum_n  \hat{\Pi}_n \hat\sigma \hat{\Pi}_n)$.

\section{Proof of Theorem.~\ref{skewInfo}}
\begin{proof}
In order to prove the upper bound of  $M^B_\sigma({\hat{\rho}}) = 2 - 2 \sqrt{{\cal F}(\hat\rho, \Phi_\sigma(\hat\rho))}$, we show the lower bound of the fidelity $\sqrt{{\cal F}(\hat\rho, \Phi_\sigma(\hat\rho))}$.
We first prove the inequality when the state is pure.
Note that when one of the states are pure, the fidelity is given by ${\cal F}(\ket{\psi} \bra{\psi}, \hat\sigma) = \bra{\psi} \hat\sigma\ket{\psi}$.
Then, we have for coarse-grained measurement process,
\begin{equation}
\eqalign{
{\cal F}(\ket{\psi}\bra{\psi}, \Phi_\sigma \left( \ket{\psi} \bra{\psi} \right) ) \\
\qquad = \int_{-\infty}^\infty dx \bra{\psi} \hat{Q}_x^\sigma \ket{\psi} \bra{\psi} \hat{Q}_x^\sigma \ket{\psi}\\
\qquad =\sum_{i,j} \exp{\left[-\frac{(a_i-a_j)^2}{8\sigma^2}\right]} \bra{\psi} {i} \rangle \langle {i} \ket{\psi} \bra{\psi} {j} \rangle \langle {j} \ket{\psi}\\
\qquad  \geq \exp{\left[ -\frac{\sum_{i,j} (a_i-a_j)^2 |\bra{\psi} {i} \rangle|^2 |\bra{\psi} {j} \rangle|^2}{8\sigma^2} \right]}.
}
\end{equation}
Note that $|\bra{\psi} i \rangle|^2$ is the probability of getting outcome $a_i$, thus $\sum_{i,j} (a_i-a_j)^2 |\bra{\psi} {i} \rangle|^2 |\bra{\psi} {j} \rangle|^2 = 2 {\rm Var}_{\ket{\psi}}(\hat{A})$, then  $\sqrt{{\cal F}(\hat\rho, \Phi_\sigma(\hat\rho))} \geq \exp[- {\rm Var}_{\ket{\psi}}(\hat{A})/(8\sigma^2)]$.

For mixed states, we use the inequality $\sqrt{ {\cal F} (\hat{\rho}, {\hat{\rho}}')} \geq \Tr \sqrt{\hat{\rho}} \sqrt{{\hat{\rho}}'}$, which can be proved by using Araki-Lieb-Thirring inequality \cite{Araki90}.
We also note that for an unital operator set $\{ \hat{K}_n \}$, $\sqrt{\sum_n \hat{K}_n {\hat{\rho}} \hat{K}_n^\dagger} \geq \sum_n \hat{K}_n \sqrt{\hat{\rho}} \hat{K}_n^\dagger$ by operator Jenesn's inequality \cite{Hansen03}.
Putting these inequalities together, we finally get $\sqrt{{\cal F}({\hat{\rho}},\sum_n \hat{K}_n {\hat{\rho}} \hat{K}_n^\dagger)} \geq \sum_n \Tr \sqrt{\hat{\rho}} \hat{K}_n \sqrt{\hat{\rho}} \hat{K}_n^\dagger$.
We also note that a coarse-grained measurement operator $\{ \hat{Q}^\sigma_x \} = \{ \sum_i \sqrt{q^\sigma_i(x)} \ket{i} \bra{i} \}$  is a unital operator set.
Then using previous results on the pure state, we get
\begin{equation}
\eqalign{
\sqrt{{\cal F}({\hat{\rho}}, \Phi_\sigma({\hat{\rho}}) )} 
&\geq \int_{-\infty}^{\infty} dx \Tr \left[ \sqrt{\hat{\rho}} \hat{Q}_x^\sigma \sqrt{{\hat{\rho}}} \hat{Q}_x^\sigma \right]\\
&= \sum_{i,j} \exp{\left[-\frac{(a_i - a_j)^2}{8\sigma^2}\right]} |(\sqrt{{\rho}})_{ij}|^2.}
\end{equation}
Now we apply Jensen's inequality in order to obtain the final result,
\begin{equation}
\eqalign{
\sqrt{{\cal F}({\hat{\rho}}, \Phi_\sigma({\hat{\rho}}) )} 
&\geq \exp{\left[-\frac{\sum_{i,j} (a_i - a_j)^2  |(\sqrt{\rho})_{ij}|^2}{8\sigma^2}\right]} \\&= \exp{\left[-\frac{I_W({\hat{\rho}},\hat{A})}{4\sigma^2}\right]},}
\end{equation}
where $I_W({\hat{\rho}}, \hat{A}) = (-1/2) \Tr[\sqrt{\hat{\rho}}, \hat{A}]^2 = (1/2)\sum_{i,j} (a_i - a_j)^2  |(\sqrt{\rho})_{ij}|^2$ is the Wigner-Yanase-Dyson skew information.
\end{proof}

\section{Comparison between the bound $B_F$ and $B_W$}
\label{AppendixC}

\begin{figure}[t]
\centering
\includegraphics[width=0.6\linewidth]{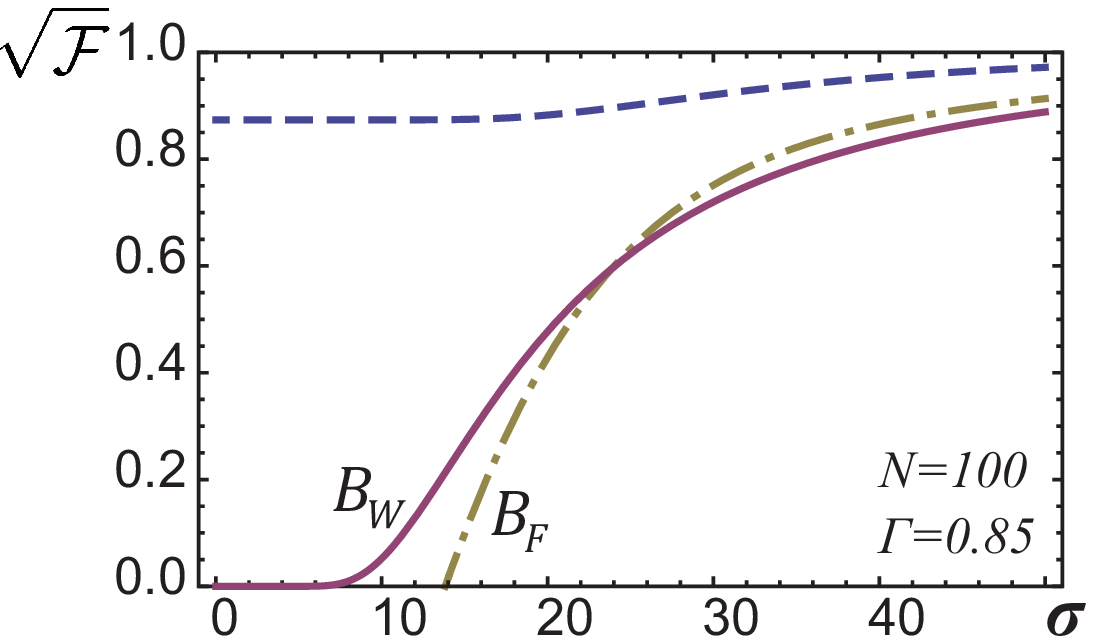}
\caption{Comparison between the lower bounds $B_F$(dot-dashed line) and $B_W$(solid line) of the fidelity $\sqrt{\cal F}$(dashed line) for the decohered GHZ-state $\hat\rho_\Gamma^N$ with $N=100$ and $\Gamma = 0.85$.
}
\label{CompareBound}
\end{figure}
We compare two different lower bounds of the fidelity between pre- and post measurement states, $\sqrt{{\cal F} (\hat\rho, \Phi_\sigma(\hat\rho))} \geq B_{F(W)}$ where $B_F=e^{- \frac{I_F (\hat\rho, \hat{A})} {32 \sigma^2}} - {\rm erfc}\left( \frac{\sqrt{2} \pi \sigma}{\sqrt{I_F (\hat\rho, \hat{A})}} \right)$  \cite{Frowis16} and $B_W=e^{-\frac{I_W(\hat\rho,\hat{A})}{4\sigma^2}}$, respectively.
We evaluate both the bounds
for a decohered GHZ-state in a spin system given by
\begin{equation}
\hat\rho_\Gamma^N = \frac{1}{2} \left( \ket{0} \bra{0} + \ket{N}\bra{N} +  \Gamma ( \ket{0}\bra{N}+  \ket{N}\bra{0}) \right),
\end{equation}
where $0 \leq \Gamma \leq 1$.
In this case, the quantum Fisher information and Wigner-Yanase-Dyson skew information are given by 
$ I_F(\hat\rho_\Gamma^N, \hat{A}) = N^2 \Gamma^2$ and $I_W(\hat\rho_\Gamma^N, \hat{A}) = \frac{N^2}{4} (1-\sqrt{1-\Gamma^2})$, respectively.
When the coarse-grain parameter $\sigma$ is large, the second term of $B_F$ becomes negligible and $B_F$ gives a tighter bound than $B_W$ (note that $I_F \leq I_W$ for any $\Gamma$).
On the other hand, if the coarse-grain parameter $\sigma$ is relatively small compared to $I_F$ and $I_W$, the second term of $B_F$ has a significant value while the first term becomes small.
Thus, there is some value of $\sigma$ where the two bounds, $B_F$ and $B_W$, meet as described in \Fref{CompareBound}.
When the state is pure ($\Gamma = 1$), we can use the bound \eref{Ieq3}, and our bound given by  $\sqrt{{\cal F} (\hat\rho, \Phi_\sigma(\hat\rho))} \geq B = e^{-\frac{{\rm Var}_{\ket{\psi}}(\hat{A})}{8\sigma^2}} = e^{-\frac{N^2}{32\sigma^2}}$ is always tighter than $B_F = e^{-\frac{N^2}{32\sigma^2}} - {\rm erfc}\left( \frac{\sqrt{2} \pi \sigma}{N^2} \right)$.

\section*{References}

\end{document}